\definecolor{uuuuuu}{rgb}{0.26666666666666666,0.26666666666666666,0.26666666666666666}
\definecolor{draff}{rgb}{0.49019607843137253,0.49019607843137253,1.}
\definecolor{faffed}{rgb}{1.,0.,0.}
\definecolor{uuuuuu}{rgb}{0.26666666666666666,0.26666666666666666,0.26666666666666666}
\definecolor{qqwuqq}{rgb}{0.,0.39215686274509803,0.}
\definecolor{zzttqq}{rgb}{0.6,0.2,0.}
\definecolor{draff}{rgb}{0.49019607843137253,0.49019607843137253,1.}
\definecolor{qqqqff}{rgb}{0.,0.,1.}
\definecolor{cqcqcq}{rgb}{0.7529411764705882,0.7529411764705882,0.7529411764705882}
\theoremstyle{plain}
\newtheorem{lemma}[subsection]{Lemma}
\theoremstyle{definition}
\newtheorem{defi}[subsection]{Definition}
\newtheorem{example}[subsection]{Example}
\newtheorem{remark}[subsection]{Remark}
\newtheorem{note}[subsection]{Note}
\newcommand{\uu}{\cup}
\newcommand{\UU}{\bigcup}
\newcommand{\ci}{\subseteq}
\newcommand{\sci}{\subset}
\newcommand{\es}{\emptyset}
\newcommand{\set}[1]{\{#1\}}
\newcommand{\ga}{\alpha}
\newcommand{\gb}{\beta}
\newcommand{\gd}{\delta}
\renewcommand{\gg}{\gamma}
\newcommand{\gs}{\sigma}
\newcommand{\tit}{\textit}
\newcommand{\D}[1]{\mathbb{#1}}
\newcommand{\te}{\text}
\begin{document}

To appear, Southeast Asian Bulletin of Mathematics
\title{An algorithm to compute CVTs for finitely generated Cantor distributions}

\author{Carl P. Dettmann}
\address{University of Bristol\\
School of Mathematics\\
University Walk\\
Bristol BS8 1TW\\
UK.}
\email{Carl.Dettmann@bris.ac.uk}

\author{ Mrinal Kanti Roychowdhury}
\address{School of Mathematical and Statistical Sciences\\
University of Texas Rio Grande Valley\\
1201 West University Drive\\
Edinburg, TX 78539-2999, USA.}
\email{mrinal.roychowdhury@utrgv.edu}
\subjclass[2010]{60Exx, 94A34.}
\keywords{Uniform distributions, optimal sets, quantization error}
\thanks{The research of the first author was supported by the Engineering and Physical Sciences Research Council (EPSRC) Grant EP/N002458/1, and that of the second author was supported by U.S. National Security Agency (NSA) Grant H98230-14-1-0320}

\subjclass[2010]{60Exx, 94Axx, 28A80.}
\keywords{Probability measure, Cantor set, distortion error, centroidal Voronoi tessellation}
\thanks{}

\date{}
\maketitle

\pagestyle{myheadings}\markboth{Carl P. Dettmann and Mrinal Kanti Roychowdhury}{An algorithm to compute CVTs for finitely generated Cantor distributions}

\begin{abstract}
Centroidal Voronoi tessellations (CVTs) are Voronoi tessellations of a region such that the generating points of the tessellations are also the centroids of the corresponding Voronoi regions with respect to a given probability measure. CVT is a fundamental notion that has a wide spectrum of applications in computational science and engineering. In this paper, an algorithm is given to obtain the CVTs with $n$-generators to level $m$, for any positive integers $m$ and  $n$, of any  Cantor set generated by a pair of self-similar mappings given by $S_1(x)=r_1x$ and $S_2(x)=r_2x+(1-r_2)$ for $x\in \mathbb R$, where $r_1, r_2>0$ and $r_1+r_2<1$, with respect to any probability distribution $P$ such that $P=p_1 P\circ S_1^{-1}+p_2 P\circ S_2^{-1}$, where $p_1, p_2>0$ and $p_1+p_2=1$.
\end{abstract}

\section{Introduction}
 Let $\D R^d$ denote the $d$-dimensional Euclidean space, $\|\cdot\|$ denote the Euclidean norm on $\D R^d$ for any $d\geq 1$, and $P$ be a Borel probability measure on $\D R^d$. Given a finite set $\ga\sci \D R^d$, the Voronoi region generated by $a\in \ga$ is defined by
\[W(a|\ga)=\set{x \in \D R^d : \|x-a\|=\min_{b \in \ga}\|x-b\|}\]
i.e., the Voronoi region generated by $a\in \ga$ is the set of all elements in $\D R^d$ which are closest to $a \in \ga$, and the set $\set{W(a|\ga) : a \in \ga}$ is called the \tit{Voronoi diagram} or \tit{Voronoi tessellation} of $\D R^d$ with respect to $\ga$. A Borel measurable partition $\set{A_a : a \in \ga}$ of $\D R^d$  is called a \tit{Voronoi partition} of $\D R^d$ with respect to $\ga$ (and $P$) if $P$-almost surely
$A_a \sci W(a|\ga)$ for every $a \in \ga$.  Given a Voronoi tessellation $\set{M_i}_{i=1}^k$ generated by a set of points $\set{z_i}_{i=1}^k$ (called \tit{sites} or \tit{generators}), the mass centroid  $c_i$ of $M_i$ with respect to the probability measure $P$ is given by
\begin{align*}
c_i=\frac{1}{P(M_i)}\int_{M_i} x dP(x)=\frac{\int_{M_i} x dP(x)}{\int_{M_i} dP(x)}.
\end{align*}
The Voronoi tessellation is called the \tit{centroidal Voronoi tessellation} (CVT) if $z_i=c_i$ for $i=1, 2, \cdots, k$, that is, if the generators are also the centroids of the corresponding Voronoi regions. It is interesting to notice that CVTs are not necessarily unique for a fixed probability measure and the number of generators, i.e., it is possible to have two or more different CVTs for a fixed probability measure and the number of generators (see \cite {DFG} for absolutely continuous probability measure, and see \cite{R1} for singular continuous probability measure). CVT generates an evenly-spaced distribution of sites in the domain with respect to a given probability measure and is therefore very useful in many fields, such as optimal quantization, clustering, data compression, optimal mesh generation, cellular biology, optimal quadrature, coverage control and geographical optimization (see \cite{DFG, OBSC} for more details). Besides, it has  applications in energy efficient distribution of base stations in a cellular network (see \cite{HCHSVH, KKR, S}). In both geographical and cellular applications
the distribution of users is highly complex and often modelled by a fractal (see \cite{ABDHW, LZSC}).

Let $\ga\sci \D R^d$ be a finite set. Then, the \tit{cost} or \tit{distortion} error for $P$ with respect to the set $\ga$, denoted by $V(P;\ga)$, is given by
\[V(P; \ga):=\int \min_{a \in \ga} \|x-a\|^2 dP(x). \]
Notice that if $\ga:=\set{a_1, a_2, \cdots, a_k}$, then
\[V(P; \ga):=\sum_{j=1}^k\int_{A_j} \|x-a_j\|^2 dP(x),\]
where $A_j$ is the Voronoi region of $a_j$, i.e., $a_j=W(a_j|\ga)$ for all $1\leq j\leq k$.
Then, the \tit{$n$th quantization error} for $P$, denoted by $V_n:=V_n(P)$, is defined by
\[V_n(P)=\te{inf}\set{V(P; \ga) : \ga \sci \D R^d, \, \te{card}(\ga) \leq n}.\]
If $\int \| x\|^2 dP(x)<\infty$, then there is some set $\ga$ for which the infimum is achieved (see \cite{ GKL, GL, GL1}). Such a set $\ga$ for which the infimum occurs and contains no more than $n$ points is called an \tit{optimal set of $n$-means}.  It is well-known that for a continuous Borel probability measure an optimal set of $n$-means always contains $n$ elements (see \cite{GL1}).
To know more details about quantization, one is referred to \cite{AW, GG, GL1, GN}. To see some work in the direction of optimal sets of $n$-means, one is refereed to \cite{DR, GL2, R2, R3, R4, RR}.
For a Borel probability measure $P$ on $\D R^d$, an optimal set of $n$-means forms a CVT with $n$-means ($n$-generators) of $\D R^d$; however, the converse is not true in general (see \cite{DFG, R1}). A CVT with $n$-means is called an \tit{optimal CVT with $n$-means} if the generators of the CVT form an optimal set of $n$-means with respect to the probability distribution $P$.

There are many applications of quantization of measure~\cite{GL1,I}.  As an example, suppose we want to locate
cellular phone towers or smaller access points so as to minimize the power consumption which increases as the
square of distance to the mobile users.  These users are however distributed in a very non-uniform manner
often modelled by a fractal~\cite{SD}, thus we need to minimize quantization error over this fractal measure.

Let $X$ be a nonempty compact subset of $\D R^d$; sometimes one can take $X=\D R^d$. A transformation $S: X \to X$ on a metric space $(X, d)$ is called a \tit{contractive} or a \tit{contraction mapping} if there is a constant $0<c<1$ such that $d(S(x), S(y))\leq c d(x, y)$ for all $x, y \in X$. On the other hand, $S$ is called a \tit{similarity mapping} or a \tit{similitude} if there exists a constant $s>0$ such that $d(S(x), S(y))=s d(x, y)$ for all $x, y\in X$. Here $s$ is called the \tit{similarity ratio} or the \tit{similarity constant} of the similarity mapping $S$. For any $N\geq 2$, an iterated function system (IFS) on $X$ is a collection of contraction mappings $S_1, S_2, \cdots, S_N$ on $X$. It is well-known that if the $S_j$ are contractions on $X$, then there exists a unique nonempty compact subset $J$ of $X$ such that
\[J=\UU_{j=1}^N S_j(J),\]
(see \cite{H, F}). We call $J$  \tit{the invariant set} or \tit{the attractor} or \tit{the limit set} of the IFS.

The open set condition (OSC) is the statement that there exists an open set $V\subset X$ such that $\UU_{j=1}^N
S_j(V)\subseteq V$ and the union is disjoint, informally, that the images of $J$ under the contractions do not overlap
too much.  The OSC is an important condition in many proofs, for example, regarding Hausdorff dimension~\cite{BVH05}.

If we associate the IFS with a probability vector $p=(p_1, p_2, \cdots, p_N)$, with $p_j>0$ for all $1\leq j\leq N$, then there exists a unique Borel probability measure $\mu_p$ on $\D R^d$ with supp$(\mu_p)=J$ such that $\mu_p=\sum_{j=1}^N p_j\mu_p\circ S_j^{-1}$, where $\mu_p\circ S_j^{-1}$ denotes the image measure of $\mu_p$ with respect to
$S_j$ for $1\leq j\leq N$, i.e.,
$ \mu_p(A)=\sum_{j=1}^Np_j\mu_p\circ S_j^{-1}(A),$
for all Borel sets $A \ci X$. We call $\mu_p$ the \tit{invariant measure} of the IFS associated with the probability vector $p$ (see \cite{H, F}).

 Let $C$ be the Cantor set generated by the two contractive similarity mappings $S_1$ and $S_2$ on $\D R$ such that $S_1(x)=r x$ and $ S_2 (x)=r x +(1-r)$ for all $x \in \D R$, where $0<r<\frac 1 2$. Let $P=\frac 1 2 P\circ S_1^{-1}+\frac 12 P\circ S_2^{-1}$. Then, $P$ is a singular continuous probability measure on $\D R$ with support the Cantor set $C$. If $r=\frac 13$, then in \cite{GL2}, Graf and Luschgy gave a formula to determine the optimal sets of $n$-means for the probability distribution $P$ for any $n\geq 2$. In \cite{R5},  L. Roychowdhury gave an induction formula for $n\geq 2$, to obtain the optimal sets of $n$-means for the Cantor distribution $P$ given by $P=\frac 14 P\circ S_1^{-1}+\frac 34 P\circ S_2^{-1}$ with support the Cantor set generated by the two mappings $S_1$ and $S_2$, where $S_1(x)=\frac 14 x$ and $S_2(x)=\frac 12 x+\frac 12$ for all $x \in \D R$. In \cite{R1}, the second author  gave a formula to determine the CVTs with $n$-means, $n\geq 2$, of the Cantor set generated by $S_1(x)=rx$ and $S_2(x)=rx+(1-r)$, $x\in \D R$, for any $r$ in the range  $0.4364590141\leq r\leq 0.4512271429$, associated with the probability distribution $P=\frac 12 P\circ S_1^{-1}+\frac 12 P\circ S_2^{-1}$.

There is no general formula to obtain the CVTs of any Cantor set generated by any two contractive similarity mappings $S_1$ and $S_2$ on $\D R$ such that $S_1(x)=r_1 x$ and $ S_2 (x)=r_2 x +(1-r_2)$ for all $x \in \D R$, where $r_1, r_2>0$ and $r_1+r_2<1$, supported by any probability distribution $P$ given by $P=p_1 P\circ S_1^{-1}+p_2 P\circ S_2^{-1}$, where $p_1, p_2>0$ and $p_1+p_2=1$. In this paper, we give an algorithm to obtain the CVTs with $n$-means at level $m$ of any Cantor set for any $n\geq 1$ supported by any probability distribution $P$ given by $P=p_1 P\circ S_1^{-1}+p_2 P\circ S_2^{-1}$.  Here, level $m$ refers to the granularity of the partition in terms of the natural partition of the Cantor set into cylinder sets.

We also give several examples and obtain the CVTs implementing the algorithm, as well as providing numerical simulations for the first nontrivial case $n=3$ in the uniform symmetric Cantor set case that illustrate the intricacy of the general problem.  We see that the optimal CVT occurs at a low level $m$ apart from the vicinity of $r=1/2$.  This can be understood in that placing the boundary between partition elements within the Cantor set (rather than at a significant gap) leads to unnecessary cost in the quantization error.  At $r=1/2$ the measure becomes Lebesgue on the unit interval, so the partition boundaries occur at $1/3$ and $2/3$, which are not at finite level with respect to the natural (dyadic) partition.

The algorithm in this paper can be extended to obtain the CVTs for any singular continuous probability measure supported by the limit set generated by a finite number of contractive mappings on  $\D R$, under the OSC.
Finally, we would like to say that, there are some algorithms to obtain CVTs with $n$-means for any $n\geq 1$ of a region with an absolutely continuous probability measure (see \cite{J}, and the references therein); but to the best of our knowledge there is no such algorithm  for a singular continuous probability measure. So, the result, in this paper, is the first advance in this direction.

\section{Basic definitions and lemmas}

By a \textit{string} or a \textit{word} $\sigma$ over an alphabet $\{1, 2\}$, we mean a finite sequence $\gs:=\gs_1\gs_2\cdots \gs_k$
of symbols from the alphabet, where $k\geq 1$, and $k$ is called the length of the word $\gs$.  A word of length zero is called the \textit{empty word}, and is denoted by $\emptyset$.  By $\{1, 2\}^*$ we denote the set of all words
over the alphabet $\{1, 2\}$ of some finite length $k$ including the empty word $\emptyset$. For any two words $\gs:=\gs_1\gs_2\cdots \gs_k$ and
$\tau:=\tau_1\tau_2\cdots \tau_\ell$ in $\{1, 2\}^*$, by
$\gs\tau:=\gs_1\cdots \gs_k\tau_1\cdots \tau_\ell$ we mean the word obtained from the
concatenation of the two words $\gs$ and $\tau$. Let $S_1$ and $S_2$ be two contractive similarity mappings on $\D R$ given by $S_1(x)=r_1 x$ and $S_2(x)=r_2 x+(1-r_2)$, where $0<r_1, r_2<1$ and $r_1+r_2<1$. Let $(p_1, p_2)$ be a probability vector with $0<p_1, p_2<1$ and $p_1+p_2=1$. For $\gs:=\gs_1\gs_2 \cdots\gs_k \in \{ 1, 2\}^k$, set $S_\gs=S_{\gs_1}\circ \cdots \circ S_{\gs_k}$, $s_\gs=s_{\gs_1}s_{\gs_2}\cdots s_{\gs_k}$, $p_\gs=p_{\gs_1}p_{\gs_2}\cdots p_{\gs_k}$,
and $J_\gs=S_{\gs}([0, 1])$. For the empty word $\es$, by $S_\es$ we mean the identity mapping on $\D R$, and we write $J_\es=S_\es([0,1])=[0, 1]$. Then the set $C=\bigcap_{k\in \mathbb N} \bigcup_{\gs \in \{1, 2\}^k} J_\gs$ is known as the \textit{Cantor set} generated by the two mappings $S_1$ and $S_2$, and equals the support of the probability measure $P$ given by $P=p_1 P\circ S_1^{-1}+p_2 P\circ S_2^{-1}$. For $\gs \in \set{1, 2}^k$, $k\geq 1$, the intervals $J_{\sigma  1}$, $J_{\sigma  2}$ into which $J_\sigma$
is split up at the $(k+1)$th level are called the children of $J_\sigma$.

Let us now give the following two lemmas.

\begin{lemma} \label{lemma1}
Let $f : \mathbb R \to \mathbb R^+$ be Borel measurable and $k\in \mathbb N$. Then
\[\int f dP=\sum_{\sigma \in \{1, 2\}^k} p_\gs \int f \circ S_\sigma dP.\]
\end{lemma}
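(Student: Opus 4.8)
The plan is to prove the identity by first establishing the base case $k=1$ directly from the self-similarity of $P$, and then bootstrapping to arbitrary $k$ by iterating that one-step relation. For the base case, I would start from the defining relation $P = p_1 P\circ S_1^{-1} + p_2 P\circ S_2^{-1}$ and write, for any Borel measurable $f$,
\[
\int f\, dP = p_1 \int f\, d(P\circ S_1^{-1}) + p_2 \int f\, d(P\circ S_2^{-1}).
\]
The key tool here is the change-of-variables (image measure / pushforward) formula: for a measure $\mu$ on $\D R$ and a measurable map $S$, one has $\int g\, d(\mu\circ S^{-1}) = \int g\circ S\, d\mu$ whenever either side makes sense. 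Applying this with $\mu = P$, $S = S_i$, and $g = f$ gives $\int f\, d(P\circ S_i^{-1}) = \int f\circ S_i\, dP$, and hence
\[
\int f\, dP = p_1 \int f\circ S_1\, dP + p_2 \int f\circ S_2\, dP = \sum_{\gs\in\set{1,2}^1} p_\gs \int f\circ S_\gs\, dP,
\]
which is exactly the claim for $k=1$.

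For the inductive step, suppose the identity holds for some $k\geq 1$. Given $\gs = \gs_1\cdots\gs_k \in \set{1,2}^k$, apply the $k=1$ case to the Borel measurable function $f\circ S_\gs$ to obtain $\int f\circ S_\gs\, dP = \sum_{i=1}^2 p_i \int f\circ S_\gs\circ S_i\, dP$. Since $S_\gs\circ S_i = S_{\gs i}$ and $p_\gs p_i = p_{\gs i}$, summing the induction hypothesis over $\gs\in\set{1,2}^k$ and substituting gives
\[
\int f\, dP = \sum_{\gs\in\set{1,2}^k} p_\gs \sum_{i=1}^2 p_i \int f\circ S_{\gs i}\, dP = \sum_{\tau\in\set{1,2}^{k+1}} p_\tau \int f\circ S_\tau\, dP,
\]
because every word $\tau$ of length $k+1$ factors uniquely as $\tau = \gs i$ with $\gs\in\set{1,2}^k$ and $i\in\set{1,2}$. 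This completes the induction.

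The only genuine subtlety is a measure-theoretic bookkeeping issue: the pushforward/change-of-variables formula is cleanest for nonnegative $f$ (where everything, possibly $+\infty$, is well defined), and for general Borel $f$ one should either assume integrability or read the identity in the extended-real sense after splitting $f = f^+ - f^-$. Since the maps $S_i$ are affine homeomorphisms of $\D R$, measurability of $f\circ S_\gs$ is automatic, so there is no obstruction there. Thus the "hard part" is really just stating the hypotheses under which both sides are simultaneously meaningful; the algebraic heart of the argument is the trivial observation that concatenation gives a bijection $\set{1,2}^k\times\set{1,2}\to\set{1,2}^{k+1}$ compatible with the multiplicative structure of $S_\bullet$ and $p_\bullet$.
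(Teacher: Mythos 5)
Your proof is correct and follows essentially the same route as the paper: the paper inducts on $k$ to get the measure-level identity $P=\sum_{\sigma\in\{1,2\}^k}p_\sigma P\circ S_\sigma^{-1}$ and then applies the pushforward change-of-variables formula, while you perform the same induction directly at the level of integrals; the two are interchangeable. The remarks on concatenation being a bijection and on integrability/measurability are fine and, if anything, more careful than the paper's one-line argument.
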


\begin{proof}
We know $P=p_1  P\circ S_1^{-1} +p_2 P\circ S_2^{-1}$, and so by induction $P=\sum_{\sigma \in \{1, 2\}^k}p_\gs P\circ S_\sigma^{-1}$, and thus the lemma is yielded.
\end{proof}

\begin{lemma} \label{lemma2} Let $X$ be a random variable with probability distribution $P$. Then, the expectation $E(X)$ and the variance $V:=V(X)$ of the random variable $X$ are given by
\[E(X)=\frac {p_2(1-r_2)}{1-p_1r_1-p_2r_2}  \text{ and } V=-\frac{p_2 \left(r_1-1\right){}^2 \left(\left(p_1 r_1-1\right){}^2+p_2 \left(p_1 r_1^2-1\right)\right)}{\left(p_1 r_1+p_2 r_2-1\right){}^2 \left(p_1 r_1^2+p_2 r_2^2-1\right)}.\]
\end{lemma}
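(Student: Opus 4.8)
The plan is to exploit the self-similarity $P=p_1 P\circ S_1^{-1}+p_2 P\circ S_2^{-1}$ via Lemma~\ref{lemma1} with $k=1$, which turns every moment of $P$ into an affine equation in the lower-order moments. Taking $f(x)=x$ gives
\begin{align*}
E(X)=\int x\,dP=p_1\int S_1(x)\,dP+p_2\int S_2(x)\,dP=p_1 r_1 E(X)+p_2\bigl(r_2 E(X)+1-r_2\bigr),
\end{align*}
so $\bigl(1-p_1 r_1-p_2 r_2\bigr)E(X)=p_2(1-r_2)$, and since $r_1,r_2\in(0,1)$ forces $p_1r_1+p_2r_2<p_1+p_2=1$, one may divide to obtain the stated value of $E(X)$.

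For the variance I would apply the same device to the second moment. Taking $f(x)=x^2$ in Lemma~\ref{lemma1} and expanding $S_2(x)^2=(r_2x+1-r_2)^2$ yields
\begin{align*}
E(X^2)=p_1 r_1^2 E(X^2)+p_2\Bigl(r_2^2 E(X^2)+2r_2(1-r_2)E(X)+(1-r_2)^2\Bigr),
\end{align*}
hence $\bigl(1-p_1 r_1^2-p_2 r_2^2\bigr)E(X^2)=p_2\bigl((1-r_2)^2+2r_2(1-r_2)E(X)\bigr)$; here the denominator is again positive because $r_i^2<r_i<1$. Solving for $E(X^2)$ and forming $V=E(X^2)-\bigl(E(X)\bigr)^2$ with the value of $E(X)$ already in hand produces a rational expression in $r_1,r_2,p_1,p_2$. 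A marginally cleaner variant is to use the law of total variance, conditioning on the first symbol: $X$ has the law of $r_{\sigma_1}X'+b_{\sigma_1}$, where $X'$ is an independent copy of $X$, $b_1=0$, $b_2=1-r_2$, and the symbol equals $i$ with probability $p_i$; this gives
\begin{align*}
V=(p_1 r_1^2+p_2 r_2^2)V+p_1 p_2\bigl((r_1-r_2)E(X)-(1-r_2)\bigr)^2,
\end{align*}
and substituting the value of $E(X)$ collapses the bracketed factor to $(1-r_2)(r_1-1)/(1-p_1r_1-p_2r_2)$, after which $V$ is immediate.

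Either route leaves only algebraic bookkeeping: rearranging the resulting identity into the exact grouping displayed in the statement, using $p_1+p_2=1$ to rewrite such combinations as $p_1(1-r_1)^2=(p_1 r_1-1)^2+p_2(p_1 r_1^2-1)$ and $1-p_1r_1^2-p_2r_2^2=-(p_1r_1^2+p_2r_2^2-1)$, $1-p_1r_1-p_2r_2=-(p_1r_1+p_2r_2-1)$ to match the signs and the factored denominator. I expect this simplification — tracking signs and reproducing the particular arrangement of terms in the statement — to be the only real obstacle; conceptually there is nothing beyond linearizing the moment equations through the self-similarity, and all the divisions performed are legitimate because $r_1,r_2\in(0,1)$.
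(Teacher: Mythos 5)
Your proposal is correct and follows essentially the same route as the paper: both derive $E(X)$ and $E(X^2)$ from the one-step self-similarity relation and set $V=E(X^2)-(E(X))^2$ (your law-of-total-variance variant is just an equivalent shortcut). One caveat about your final ``match the exact grouping'' step: your own identity $(p_1r_1-1)^2+p_2(p_1r_1^2-1)=p_1(r_1-1)^2$ shows the numerator as displayed in the lemma equals $-p_1p_2(r_1-1)^4$, whereas both your computation and the paper's own proof yield $-p_1p_2(r_1-1)^2(r_2-1)^2$; the leading factor $(r_1-1)^2$ in the statement should read $(r_2-1)^2$ (a misprint, immaterial when $r_1=r_2$), so the literal matching cannot be completed and should not be attempted.
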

\begin{proof} Using Lemma~\ref{lemma1}, we have
\begin{align*}
E(X)&=\int x dP(x)= p_1\int S_1(x) dP(x) +p_2  \int S_2(x) dP(x)\\
=p_1 \int r_1x d P(x)&+p_2\int (r_2 x+1-r_2) dP(x)=p_1 r_1 E(X)+p_2r_2 E(X)+(1-r_2)p_2,
\end{align*}
which implies $E(X)=\frac{p_2(1-r_2)}{1-p_1r_1-p_2r_2}$. Now,
\begin{align*}
&E(X^2)=\int x^2 dP(x)=p_1 \int x^2 dP\circ S_1^{-1}(x)+p_2 \int x^2 dP\circ S_2^{-1}(x) \\
&=p_1 \int r_1^2x^2 dP(x) + p_2 \int (r_2 x+(1-r_2))^2 dP(x)
\end{align*}
which after simplification yields, $E(X^2)=\frac{p_2 \left(r_2-1\right){}^2 \left(-p_1 r_1+p_2 r_2+1\right)}{\left(p_1 r_1+p_2 r_2-1\right) \left(p_1 r_1^2+p_2 r_2^2-1\right)}$, and hence
\[V=E(X-E(X))^2=E(X^2)-\left(E(X)\right)^2 =-\frac{p_2 \left(r_2-1\right){}^2 \left(\left(p_1 r_1-1\right){}^2+p_2 \left(p_1 r_1^2-1\right)\right)}{\left(p_1 r_1+p_2 r_2-1\right){}^2 \left(p_1 r_1^2+p_2 r_2^2-1\right)},\]
which is the lemma.
\end{proof}

The following two notes are in order.
\begin{note}\label{note1}  For any $x_0 \in \D R$, we have
$\int(x-x_0)^2 dP(x) =V(X)+(x_0-E(X))^2$. Thus, one can deduce that the optimal set of one-mean is the expected value and the corresponding quantization error is the variance $V$ of the random variable $X$. For $\sigma \in \{1, 2\}^k$, $k\geq 1$, using Lemma~\ref{lemma1}, we have
\begin{align*}
&E(X : X \in J_\sigma) =\frac{1}{P(J_\sigma)} \int_{J_\sigma} x  dP(x)=\int_{J_\sigma} x  dP\circ S_\sigma^{-1}(x)=\int S_\sigma(x)  dP(x)=E(S_\sigma(X)).
\end{align*}
Since $S_1$ and $S_2$ are similarity mappings, it is easy to see that $E(S_j(X))=S_j(E(X))$ for $j=1, 2$, and so by induction, $E(S_\sigma(X))=S_\sigma(E(X))$
 for $\sigma\in \{1, 2\}^k$, $k\geq 1$.
\end{note}

\begin{note}

For words $\gb, \gg, \cdots, \gd$ in $\set{1,2}^\ast$, by $a(\gb, \gg, \cdots, \gd)$ we denote the conditional expectation of the random variable $X$ given $J_\gb\uu J_\gg \uu\cdots \uu J_\gd,$ i.e.,
\begin{equation} \label{eq1} a(\gb, \gg, \cdots, \gd)=E(X|X\in J_\gb \uu J_\gg \uu \cdots \uu J_\gd)=\frac{1}{P(J_\gb\uu \cdots \uu J_\gd)}\int_{J_\gb\uu \cdots \uu J_\gd} x dP.
\end{equation}
Thus, by Note~\ref{note1}, $a(\gs)=S_\sigma(E(X))$
 for $\sigma\in \{1, 2\}^\ast$. Moreover, for any $x_0\in \D R$ and $\gs \in \set{1, 2}^\ast$, we have
\begin{equation} \label{eq2} \int_{J_\gs}(x-x_0)^2 dP= p_\gs \int (x -x_0)^2 dP\circ S_\gs^{-1}=p_\gs  \Big(s_\gs^2V+(S_\gs(E(X))-x_0)^2\Big).\end{equation}
The expressions \eqref{eq1} and \eqref{eq2} are useful to obtain the CVTs and the corresponding distortion errors with respect to the probability distribution $P$.
\end{note}
In the next section, we give the algorithm which is the main result of the paper.

\section{Algorithm to determine the CVTs with $n$-means for any $n\geq 1$}
In this section, first we give an algorithm to obtain the centroidal Voronoi tessellations with $n$-means for any $n\geq 1$ of the Cantor set $C$ supported by the probability measure $P$ defined in the previous section. To run the algorithm one can code it either in Mathematica, Matlab, C++ or in any other programming language. To write the algorithm, let us identify any word $\gs_1\gs_2\cdots \gs_k \in \set{1, 2}^k$, $k\geq 1$, by $\set{\gs_1, \gs_2, \cdots,  \gs_k}$.
For any positive integer $m$ denote the words in the set $\set{1, 2}^m$ by the indices $1, 2, \cdots, 2^m$ in increasing order,  that is, for any $i, j \in \set{1, 2, \cdots, 2^m}$, by $i<j$, it is meant $S_i(x)<S_j(x)$ for $x\in \D R$.  For indices $i, j \in \set{1, 2, \cdots, 2^m}$, $i\leq j$, by $[i, j]$, we mean the block which contains all the words with indices from $i$ to $j$; and by $[i, i]$, it is meant the word with index $i$.  By $a[i, j]$, it is meant the expected value, as defined in \eqref{eq1}, of the random variable $X$ with distribution $P$ taking values on $J_\gs$ for some $\gs \in [i, j]$. For example, if $m=3$, then
\begin{align*}
\set{1, 2}^3&=\set{\set{1, 1, 1}, \set{1, 1, 2}, \set{1, 2, 1}, \set{1, 2, 2}, \set{2, 1, 1}, \set{2, 1, 2}, \set{2,
   2, 1}, \set{2, 2, 2}} \\
   &=\set{1, 2, \cdots, 8}.
\end{align*}
Thus, here $1=\set{1, 1, 1}, \, 2=\set{1, 1, 2}, \, \cdots, \, 8=\set{2, 2, 2}$. As $S_{\set{\gs_1, \gs_2, \cdots,  \gs_k}}$ is identical with $S_{\gs_1\gs_2\cdots \gs_k}$, for any $i, j\in \set{1, 2, \cdots, 8}$ with $i<j$, one can see that $S_i(x)<S_j(x)$ for $x\in \D R$.
Let us now state the algorithm as follows:
\subsection{Algorithm}

$(i)$ Choose an initial positive integer $m$ so that $n\leq 2^m$.

$(ii)$ Partition the set $\set{1, 2}^m$ into $n$ blocks $[i_\ell+1, i_{\ell+1}]$ for $\ell=0, 1, \cdots, n-1$  in all possible ways, where $i_0=0$ and $i_n=2^m$.  For each partition obtained in Step $(ii)$, check if $S_{i_{\ell+1}}(1)\leq \frac 12 (a[i_\ell+1, i_{\ell+1}]+a[i_{\ell+1}+1, i_{\ell+2}])\leq S_{i_{\ell+1}+1}(0)$ for all $\ell=0, 1, \cdots, n-2$; if so, then the $n$ blocks $[i_\ell+1, i_{\ell+1}]$ in the partition form a centroidal Voronoi tessellation, $P$-almost surely,  with $n$-centroids $a[i_\ell+1, i_{\ell+1}]$ for $\ell=0, 1, \cdots, n-1$. If at least one set of $n$-centroids is obtained, terminate, otherwise, go to step $(iii)$.

$(iii)$ Replace $m$ by $m+1$ and return to Step $(ii)$.

\begin{note} Let $C(n, 2^m)$ be the collection of all the sets of $n$-centroids obtained after the completion of one cycle of the algorithm for some positive integer $m$ with $n\leq 2^m$, then it is easy to see that $C(n, 2^m) \ci C(n, 2^{m+1})$.
Once a set of $n$-centroids are known the corresponding Voronoi tessellation can easily be obtained. Thus, in the sequel, sometimes we will identify a Voronoi tessellation by the set of its centroids.  By using the formula \eqref{eq2}, one can also obtain the distortion error for each Voronoi tessellation.
\end{note}

\begin{note} Substantial efficiency gains may be made by checking the conditions in step (ii) as early as possible.  For example, after choosing $i_1$ and $i_2$, we have enough information to check the $\ell=0$ condition; if this fails we need not enumerate any of the other $i_\ell$.  Furthermore, $a[i_1,i_2]$ is monotonic in $i_2$ which means that more extreme $i_2$ may be ruled out immediately.
\end{note}

\begin{note}
When $r_1=r_2=r$ and $p_1=p_2=1/2$, the measure is reflection symmetric about $x=1/2$.  This symmetry may be taken into account by assuming (without loss of generality) that the $\ell$th partition element is larger than or the same size as the $(n-1-\ell)$th partition element, with $\ell=\lfloor n/2\rfloor-1$.\label{n:3.3}
\end{note}

{\bf Algorithm step (ii) in more detail.} Taking into account Note~\ref{n:3.3} we can write an efficient form of step (ii) as follows:

$(iia)$ Enumerate $i_1$, calculate $a[1,i_1]$, and set $\ell=2$.

$(iib)$ Set upper and lower bounds for $i_\ell$, initially at $L_\ell=i_{\ell-1}+1$ and $U_\ell=2^m$ respectively.  Enumerate $i_\ell$ ignoring values outside the bounds.
If $\ell=n$ there is a single value $i_n=2^m$.

$(iic)$ Symmetry condition (if applicable): if $\ell=n-\lfloor n/2\rfloor+1$ and the block $[i_{\ell-1}+1,i_\ell]$ is larger than the block $[i_{n-\ell}+1,i_{n-\ell+1}]$, go to $(iig)$.

$(iid)$ Inequalities: Calculate $a[i_{\ell-1}+1,i_\ell]$.  Check if $S_{i_{\ell-1}}(1)\leq \frac 12 (a[i_{\ell-2}, i_{\ell-1}]+a[i_{\ell-1}+1, i_{\ell}])\leq S_{i_{\ell-1}+1}(0)$.  If the first inequality is violated, reset the lower bound: $L_\ell=i_\ell$.   If the second is violated, reset the upper bound: $U_\ell=i_\ell$.  If either inequality is violated, continue with the enumeration of $i_\ell$ and return to $(iic)$.

$(iie)$ Increment $\ell$.  If $\ell\leq n$, return to $(iib)$.

$(iif)$ Print the centroidal Voronoi tessellation.

$(iig)$ Continue with the pending enumerations.

\section{Exact examples}
Let us now give the following examples.
\begin{example} Let $r_1=r_2=\frac 13$. Then, the Cantor set defined in the previous section reduces to the classical Cantor set generated by the two mappings $S_1, S_2$ given by $S_1(x)=\frac 13 x$ and $S_2(x)=\frac 13 x+\frac 23$, and is supported by the probability measure $P$ given by $P=\frac 12 P\circ S_1^{-1}+\frac 12 P\circ S_2^{-1}$.
 \begin{defi} \label{def1} For $n\in \D N$ with $n\geq 2$ let $\ell(n)$ be the unique natural number with $2^{\ell(n)} \leq n<2^{\ell(n)+1}$. For $I\sci \set{1, 2}^{\ell(n)}$ with card$(I)=n-2^{\ell(n)}$ let $\gb_n(I)$ be the set consisting of all midpoints $a_\gs$ of intervals $J_\gs$ with $\gs \in \set{1,2}^{\ell(n)} \setminus I$ and all midpoints $a_{\gs 1}$, $a_{\gs 2}$ of the children of $J_\gs$ with $\gs \in I$. Formally,
\[\gb_n(I)=\set{a_\gs : \gs \in \set{1,2}^{\ell(n)} \setminus I} \uu \set{a_{\gs 1} : \gs \in I} \uu \set {a_{\gs 2} : \gs \in I}.\]
\end{defi}
In \cite{GL2} it was shown that $\gb_n(I)$ forms an optimal set of $n$-means for any $n\geq 2$.  Let $\gb_n$ denote all the optimal sets of $n$-means in this case. Then,
\[\gb_3=\Big\{\{\frac{1}{18},\frac{5}{18},\frac{5}{6}\}, \{\frac{1}{6},\frac{13}{18},\frac{17}{18}\}\Big\}.\]
 Notice that $\set{1, 2}^2=\set{\set{1, 1}, \set{1, 2}, \set{2, 1}, \set{2, 2}}$. Now, for $n=3$ and $m=2$, if the algorithm is run after completion of one cycle, one can see that there are two sets of three-centroids occur: one for the blocks  $[1, 1]$, $[2, 2],$ $[3, 4]$; and one for the blocks $[1, 2]$, $[3, 3]$ and $[4, 4]$. Thus, the two CVTs in this case are $\set{a[1, 1], a[2, 2], a[3, 4]}$ and $\set{a[1, 2], a[3, 3], a[4, 4]}$ which form $\gb_3$.
Now, if one runs the algorithm for the second time, that is when $n=3$ and $m=3$, then as
\[\set {1, 2}^3=\set{\set{1,1,1}, \set{1,1,2}, \set{1,2,1}, \set{1, 2, 2}, \set{2, 1, 1},\set{2, 1,2}, \set{2, 2, 1}, \set{2, 2, 2} }\] the two sets of three-centroids occur as follows: one for the blocks  $[1, 2]$, $[3, 4]$ $[5, 8]$; and one for the blocks $[1, 4]$, $[5, 6]$ and $[7, 8]$. Thus, the two CVTs in this case are
\[ \set{a[1, 2], a[3, 4], a[5, 8]} \te{ and } \set{a[1, 4], a[5, 6], a[7, 8]} \] which form $\gb_3$.
Similarly, by running the algorithm for the third time and fourth time, one can see that $C(3, 2^4)$ consists of the sets $\set{a[1,4], a[5, 8], a[9, 16]}$ and $\set{a[1, 8], a[9, 12], a[13, 16]}$ which is $\gb_3$; but, $C(3, 2^5)$ consists of the following three sets:
\[\set{a[1, 8], a[9, 16], a[17, 32]}, \ \set{a[1, 16], a[17, 24], a[25, 32]}, \ \te{ and } \set{a[1, 15], a[16, 17], a[18, 32]}\] among which $\set{a[1, 8], a[9, 16], a[17, 32]}$ and $\set{a[1, 16], a[17, 24], a[25, 32]}$ form $\gb_3$. Thus, we see that
\[\gb_3=C(3, 2^2)=C(3, 2^3)=C(3, 2^4)\sci C(3, 2^5).\]
Again, by the above definition, we have
\[\gb_4=\Big\{\frac 1 {18}, \frac{5}{18}, \frac{13}{18}, \frac {17}{18}\Big \}.\]
Now, if we start running the algorithm, taking $n=4, m=2$, one can see that $C(4, 2^2)$ consists of only one set $\set{a[1, 1], a[2, 2], a[3, 3], a[4, 4]}$ which is $\gb_4$.

$C(4, 2^3)$ consists of the following sets:
\begin{align*} &\set{a[1,1], a[2, 2], a[3, 4], a[5, 8] }, \, \set{a[1, 2], a[3, 3], a[4, 4], a[5, 8]}, \\
& \set {a[1, 2], a[3, 4], a[5, 6], a[7, 8]}, \,  \set {a[1, 4], a[5, 5], a[6, 6], a[7, 8]}, \set {a[1, 4], a[5, 6], a[7, 7], a[8, 8]}
\end{align*}
among which $\set {a[1, 2], a[3, 4], a[5, 6], a[7, 8]}=\gb_4$.

$C(4, 2^4)$ consists of the following sets:
\begin{align*} &\set{a[1,2], a[3, 4], a[5, 8], a[9, 16] }, \, \set{a[1, 4], a[5, 6], a[7, 8], a[9, 16]}, \\
& \set {a[1, 4], a[5, 8], a[9, 12], a[13, 16]}, \,  \set {a[1, 8], a[9, 10], a[11, 12], a[13, 16]}, \\
&\set {a[1, 8], a[9, 12], a[13, 14], a[15, 16]}
\end{align*}
among which $\set {a[1, 4], a[5, 8], a[9, 12], a[13, 16]}=\gb_4$.

$C(4, 2^5)$ consists of the following sets:
\begin{align*} &\set{a[1, 4], a[5, 8], a[9, 16], a[17, 32] }, \, \set{a[1, 8], a[9, 12], a[13, 16], a[17, 32]}, \\
& \set {a[1, 8], a[9, 16], a[17, 24], a[25, 32]}, \,  \set {a[1, 16], a[17, 20], a[21, 24], a[25, 32]}, \\
&\set {a[1, 16], a[17, 24], a[25, 28], a[29, 32]}
\end{align*}
among which $\set {a[1, 8], a[9, 16], a[17, 24], a[25, 32]}=\gb_4$.
Thus, one can see that $\gb_4=C(4, 2^2) \sci C(4, 2^3)=C(4, 2^4)=C(4, 2^5)$.
 \end{example}

\begin{remark} Recall that an optimal set of $n$-means forms a CVT with $n$-means; however, the converse is not always true, which is also verified from the above example. The following example shows that if one runs the algorithm for some $n$ and $m$ with $n\leq 2^m$, initially there can be no output.
\end{remark}

\begin{example} In the Cantor set construction in Section 2, let us take $r_1=r_2=\frac 49$ and $P=\frac 12 P\circ S_1^{-1}+\frac 12 P\circ S_2^{-1}$.

Now, if we keep running our algorithm for $n=3$ starting with $m=2$, then we see that both
$C(3, 2^2)$ and $C(3, 2^3)$ are empty sets, that is, there is no output for $m=2$ and $m=3$. On the other hand,
$C(3, 2^4)$ consists of the sets $\set{a[1, 4], a[5, 9], a[10, 16]}$ and $\set{a[1, 7], a[8, 12], a[13, 16]}$ which are, respectively,
$\set{0.0987654, 0.391556, 0.806737}$ and $\set{0.193263, 0.608444, 0.901235}$.

$C(3, 2^5)$ consists of the sets
$\set{a[1, 8], a[9, 18], a[19, 32]}$, $\set{a[1, 11], a[12, 20], a[21, 32]}$, \\ $\set{a[1, 12], a[13, 21], a[22, 32]}$,
and $ \set{a[1, 14], a[15, 24], a[25, 32]}$, which are, respectively, \\
$\set{0.0987654, 0.391556, 0.806737}$,
$\set{0.147939, 0.48067,0.83722}$, $\set{0.16278, 0.51933, 0.852061}$ and $\set{0.193263,0.608444,0.901235}$.

Thus, we have $C(3, 2^2)=C(3, 2^3)=\es$, $C(3, 2^4)\neq \es$ and $ C(3, 2^4)\sci C(3, 2^5)$.
\end{example}

\begin{remark}

In \cite{R1}, the second author  determined a CVT and the corresponding distortion error for the probability measure $P$ given by $P=\frac 1 2 P\circ S_1^{-1}+\frac 12 P\circ S_2^{-1}$ which has support the Cantor set generated by $S_1(x)=r x$ and $S_2(x)=rx+(1-r)$, where $0.4364590141\leq r\leq 0.4512271429$.
There it was also shown that if $0.4371985206<r\leq 0.4384471872$ and $n$ is not of the form $2^{\ell(n)}$ for any positive integer $\ell(n)$, then the distortion error of the CVT obtained using the formula given in \cite{R1} is smaller than the distortion error of the CVT obtained using the formula given by Graf and Luschgy in \cite{GL2}. In the following example we show that the CVT obtained using the formula given in \cite{R1}, though gives smaller distortion error, is not an optimal CVT.
\end{remark}
\begin{example} \label{ex1}
In the construction of the Cantor set, let us take $r_1=r_2=r=0.4375$ which lies in the range $0.4371985206<r\leq 0.4384471872$, and $P=\frac 12 P\circ S_1^{-1}+\frac 12 P\circ S_2^{-1}$.

Now, if we keep running our algorithm for $n=3$ starting with $m=2$, then we see that
\[C(3, 2^2)=\set{\set{a[1, 1], a[2, 2], a[3, 4]}, \, \set{a[1, 2], a[3, 3], a[4, 4]}}\]
which consists of the sets $\set{0.0957031, 0.341797, 0.78125}$ and $\set{0.21875,0.658203,0.904297}$ respectively, and each have the same distortion error $0.0111543$.

If $n=3$, $m=3$, we have
\[C(3, 2^3)=\set{\set{a[1, 2], a[3, 4], a[5, 8]}, \, \set{a[1, 3], a[4, 5], a[6, 8]}, \, \set{a[1, 4], a[5, 6], a[7, 8]}}, \]
i.e., $C(3, 2^3)$ consists of the sets $\set{0.0957031, 0.341797, 0.78125}$, $\set{0.15979, 0.5, 0.84021}$ and the set $\set{0.21875,0.658203,0.904297}$ with the distortion errors $0.0111543$, $0.011019$, and $0.0111543$ respectively.

If $n=3$ and $m=4$, then $C(3, 2^4)$ consists of the sets
\begin{align*}
\set{a[1, 4], a[5, 8], a[9, 16]}&=\{0.0957031, 0.341797, 0.78125\} \te{ with distortion error } 0.0111543, \\
\set{a[1, 4], a[5, 9], a[10, 16]}&=\{0.0957031, 0.389601, 0.809883\} \te{ with distortion error } 0.0111413, \\
\set{a[1, 6], a[7, 10], a[11, 16]}&=\{0.15979, 0.5, 0.84021\} \te{ with distortion error } 0.011019,\\
\set{a[1, 7], a[8, 12], a[13, 16]}&=\{0.190117,0.610399,0.904297\} \te{ with distortion error } 0.0111413,\\
\set{a[1, 8], a[9, 12], a[13, 16]}&=\{0.21875,0.658203,0.904297\} \te{ with distortion error } 0.0111543.\\
\end{align*}
\end{example}
If $n=3$ and $m=5$, then $C(3, 2^5)$ consists of the sets
\begin{align*}
\set{a[1, 8], a[9, 16], a[17, 32]}&=\{0.0957031,0.341797,0.78125\} \te{ with distortion error } 0.0111543, \\
\set{a[1, 8], a[9, 17], a[18, 32]}&=\{0.0957031,0.36721,0.795299\} \te{ with distortion error } 0.011127, \\
\set{a[1, 8], a[9, 18], a[19, 32]}&=\{0.0957031,0.389601,0.809883\} \te{ with distortion error } 0.0111413,\\
\set{a[1, 11], a[12, 20], a[21, 32]}&=\{0.14506,0.480202,0.84021\} \te{ with distortion error } 0.0110059,\\
\set{a[1, 12], a[13, 20], a[21, 32]}&=\{0.15979,0.5,0.84021\} \te{ with distortion error } 0.011019,\\
\set{a[1, 12], a[13, 21], a[22, 32]}&=\{0.15979,0.519798,0.85494\} \te{ with distortion error } 0.0110059,\\
\set{a[1, 14], a[15, 24], a[25, 32]}&=\{0.190117,0.610399,0.904297\} \te{ with distortion error } 0.0111413,\\
\set{a[1, 15], a[16, 24], a[25, 32]}&=\{0.204701,0.63279,0.904297\} \te{ with distortion error } 0.011127,\\
\set{a[1, 16], a[17, 24], a[25, 32]}&=\{0.21875,0.658203,0.904297\} \te{ with distortion error } 0.0111543.\\
\end{align*}

Now, we give the following observations:

$(i)$ In $C(3, 2^4)$ we have five CVTs, among which the CVTs $\set{a[1, 4], a[5, 9], a[10, 16]}$ and $\set{a[1, 7], a[8, 12], a[13, 16]}$ are the two CVTs with three-means that were obtained using the formula given in \cite{R1}. Moreover, among all the CVTs in $C(3, 2^4)$ the CVT $\set{a[1, 6], a[7, 10], a[11, 16]}$ gives the smallest distortion error, even smaller than the distortion error of any of the CVTs with three-means obtained in \cite{R1}.

$(ii)$ In $C(3, 2^5)$ we obtain nine CVTs, among which the CVTs $\set{a[1, 8], a[9, 18], a[19, 32]}$ and $\set{a[1, 14], a[15, 24], a[25, 32]}$ are the two CVTs with three-means that were obtained using the formula given in \cite{R1}. There are five CVTs in $C(3, 2^5)$ which have smaller distortion errors than the distortion error of any of the CVTs with three-means obtained using the formula in \cite{R1}. In addition, in $C(3, 2^5)$ we obtained two new CVTs which are $\set{a[1, 11], a[12, 20], a[21, 32]}$ and $\set{a[1, 12], a[13, 21], a[22, 32]}$, and have the smallest distortion error among all the distortion errors of the CVTs obtained in $C(3, 2^5)$.

\begin{remark}
By observations $(i)$ and $(ii)$ in Example~\ref{ex1}, we can say that for $0.4371985206<r\leq 0.4384471872$, the CVTs obtained using the formula given in \cite{R1} are not optimal.
\end{remark}

\begin{example} \label{example2} In the Cantor set construction, let us take $r_1=\frac 14$, $r_2=\frac 12$ and $P=\frac 1 4 P\circ S_1^{-1}+\frac 3 4 P\circ S_2^{-1}$, i.e., $p_1=\frac 1 4$ and $p_2=\frac 34$.

Now, for $n=3$ and $m=2$ if we run the algorithm, one can see:
$C(3, 2^2)$ consists of only one set $\set{a[1, 2], a[3, 3], a[4, 4]}=\{0.166667,0.583333,0.916667\}$ with distortion error $ 0.00561683.$

$C(3, 2^3)$ consists of the sets
\begin{align*}
\set{a[1, 4], a[5, 6], a[7, 8]}&=\{0.166667,0.583333,0.916667\} \te{ with distortion error } 0.00561683,\\
\set{a[1, 4], a[5, 7], a[8, 8]}&=\{0.166667,0.672619,0.958333\} \te{ with distortion error } 0.00617487.
\end{align*}

$C(3, 2^4)$ consists of the sets
\begin{align*}
\set{a[1, 8], a[9, 12], a[13, 16]}&=\{0.166667,0.583333,0.916667\} \te{ with distortion error } 0.00561683,\\
\set{a[1, 8], a[9, 13], a[14, 16]}&=\{0.166667,0.611294,0.927083\} \te{ with distortion error } 0.00562968,\\
\set{a[1, 8], a[9, 14], a[15, 16]}&=\{0.166667,0.672619,0.958333\} \te{ with distortion error } 0.00617487.
\end{align*}
If we fix $n=3$ and keep running the algorithm, one can see:
$\{0.166667,0.583333,0.916667\}$ is the only CVT with smallest distortion error $0.00561683$. In fact in \cite{R5}, it was shown that $\{0.166667,0.583333,0.916667\}$ is the only optimal set of three-means with quantization error $0.00561683$.

If we put $n=4$, $m=3$ and run the algorithm, we can see:
$C(4, 2^3)$ consists of the following sets,
\begin{align*}
&\{0.0416667,0.208333,0.583333,0.916667\} \te{ with distortion error } 0.00431475,\\
&\{0.0416667,0.208333,0.672619,0.958333\} \te{ with distortion error } 0.00487278,\\
&\{0.0863095,0.229167,0.583333,0.916667\} \te{ with distortion error } 0.00436125,\\
&\{0.0863095,0.229167,0.672619,0.958333\} \te{ with distortion error } 0.00491929,\\
&\{0.166667,0.583333,0.791667,0.958333\} \te{ with distortion error } 0.00268714,
\end{align*}
among which the set $\{0.166667,0.583333,0.791667,0.958333\}$ has the smallest distortion error. In fact, as shown in \cite{R5}, it is the only optimal set of four-means with quantization error  0.00268714. Thus, for a fixed $n$ by running our algorithm, if needed for several times, one can see that the CVTs with smallest distortion error obtained in our case is actually the optimal sets of $n$-means as obtained by L. Roychowdhury (see \cite{R5}).
\end{example}

\begin{figure}
\centerline{\includegraphics[width=650pt]{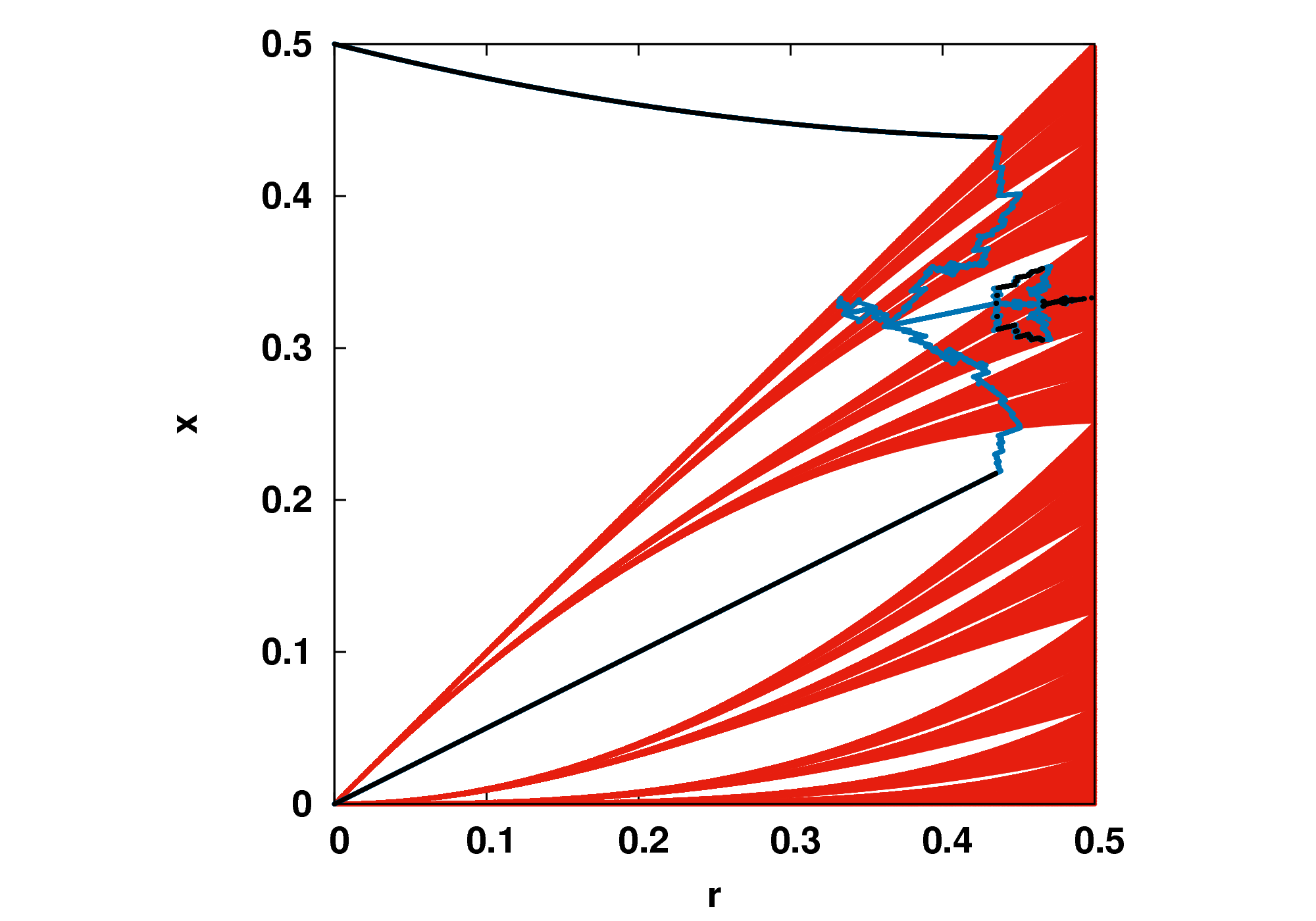}}
\caption{\label{f:lightning} Half of the symmetric Cantor set (red), showing CVTs (blue) and optimal CVTs (black).  The quantity plotted is the location of the boundary between the first and second partition elements.  Two values are optimal for almost all $r$ due to the symmetry of the Cantor set.}  
\end{figure}

As our final example, we perform numerical simulations to elicit the dependence on the scale factor. We now consider the uniform symmetric Cantor measure with $r_1=r_2=r$ and $p_1=p_2=1/2$.  The smallest nontrivial partition is $n=3$.  We run the algorithm to a level $m=14$.  The results are shown in Figure~\ref{f:lightning}.  For $r<0.43$ we see that the optimal partition is, as shown in previous literature~\cite{GL2} for $r=1/3$, given by splitting the set into one half and two quarters.  For $r>0.33$ there is an intricate structure of CVTs, some of which are optimal for $r>0.43$.  
Even in this simplest situation ($n=3$ for the uniform symmetric Cantor set), a complete analysis is far from trivial, and we postpone it to future publications.

Let us now conclude the paper with the following remark.

\begin{remark}
The algorithm given in this paper can be used to obtain the CVTs with $n$-generators, $n\geq 1$, for any singular continuous probability measure on $\D R$ supported by a Cantor like set defined as follows:

Let $(n_k)$ be a bounded sequence of positive integers such that $n_k\geq 2$ for all $k\geq 1$. Let $S_{kj}$, $1\leq j\leq n_k$, $k\geq 1$, be contractive similarity mappings on $\D R$ satisfying the open set condition with contractive ratios $0<c_{kj}<1$ such that $\sum_{j=1}^{n_k} c_{kj}<1$.  Let $p_{kj}$ be the probabilities associated with $S_{kj}$ such that $0<p_{kj}<1$  and $\sum_{j=1}^{n_k} p_{kj}=1$ for all $k\geq 1$. Let $W_n:=\prod_{k=1}^{n}\set{1, 2, \cdots, n_k}$. Then, by the set of all words $W^\ast$ it is meant:
$W^\ast=\UU_{n=1}^{\infty} W_n$.
Let $P$ be the probability measure supported by the limit set generated by the contractive mappings $S_{kj}$ on $\D R$ associated with the probabilities $p_{kj}$. Then, it is well-known that $P$ is the image measure of the product measure $\hat P$ on the space $\prod_{k=1}^\infty\set{1, 2, \cdots, n_k}$, where $\hat P=\prod_{k=1}^\infty(p_{k1}, p_{k2}, \cdots, p_{kn_k})$, under a coding map $\pi$.  For such a probability distribution $P$, our algorithm also works to determine the CVTs with $n$-means for any $n\geq 1$ with the following changes to be made: Replace $n\leq 2^m$ and $\set{1, 2}^m$ in the algorithm, respectively,  by $n\leq \prod_{k=1}^m n_k$ and $\prod_{k=1}^m \set{1, 2, \cdots, n_k}$.
\end{remark}

\end{document}